\documentclass[11pt,a4paper]{article}
\usepackage[margin=1in]{geometry}
\usepackage[T1]{fontenc}
\usepackage{lmodern}
\usepackage{microtype}
\usepackage{amsmath,amssymb,amsthm,mathtools}
\usepackage{needspace}
\usepackage{hyperref}
\hypersetup{colorlinks=true,linkcolor=black,citecolor=black,urlcolor=black,
  pdftitle={Almost every Boolean function has central threshold degree},
  pdfauthor={Martin Anthony}}
\newtheorem{theorem}{Theorem}
\newtheorem{lemma}[theorem]{Lemma}
\newcommand{\F}{\mathbb F}
\newcommand{\R}{\mathbb R}
\newcommand{\E}{\mathbb E}
\newcommand{\rank}{\operatorname{rank}}
\newcommand{\supp}{\operatorname{supp}}
\newcommand{\RM}{\operatorname{RM}}
\newcommand{\td}{\operatorname{td}}
\title{Almost every Boolean function has central threshold degree}
\author{Martin Anthony\thanks{Department of Mathematics,
London School of Economics and Political Science.\newline
Email: \href{mailto:m.anthony@lse.ac.uk}{\texttt{m.anthony@lse.ac.uk}}.}}
\date{}
\begin{document}
\maketitle
\vspace{-2\baselineskip}
\enlargethispage{\baselineskip}

\begin{abstract}
The polynomial threshold degree (or threshold degree) of a Boolean
function is the least degree of a real polynomial whose sign represents it.
We prove that almost every Boolean function has polynomial threshold
degree among the central one or two integers. More precisely, a uniformly
chosen Boolean function on $n$ variables has threshold degree either
$\lfloor n/2\rfloor$ or $\lceil n/2\rceil$, except with probability at most
$\exp(-c2^{n/2}/n)$, for an absolute constant $c>0$ and all sufficiently
large $n$.
The paper proves a 1991 conjecture of Aspnes, Beigel, Furst and Rudich.
It also proves a 1991 conjecture of Wang and Williams for even $n$ and
its main concentration assertion for odd $n$.
A parity example gives a lower bound on the exceptional probability
whose exponent differs from ours by a factor of order $n$.
The proof relates polynomial sign representation to recovery of
Reed--Muller codes from random erasures, and combines sharp-threshold
estimates with bounds on generalized Hamming weights and minimum distance.
\end{abstract}

\section{Introduction}\label{sec:introduction}

The polynomial threshold degree (or simply threshold degree) $\td(f)$
of a Boolean function $f:\{0,1\}^n\to\{-1,1\}$ is the least degree of a real polynomial
$P$ such that $f(x)P(x)>0$ at every vertex. Weak threshold degree is
defined similarly, but allows $f(x)P(x)\geq0$, provided $P$ is not
identically zero on the cube. By counting regions in a
hyperplane arrangement, Anthony~\cite{Anthony} and, independently, Alon
(reported by Saks~\cite{Saks}) established a lower bound of
$\lfloor n/2\rfloor$ for almost every Boolean function.

Earlier upper bounds for almost every Boolean function included
$19n/20$, due to Razborov and Rudich, and $0.89n$, obtained by Alon
using results of Gotsman. See O'Donnell and Servedio~\cite{OS} for an
account of these results. O'Donnell and Servedio proved the upper bound
$n/2+O(\sqrt{n\ln n})$, and also reported an independent proof of this
bound by Samorodnitsky, citing an unpublished manuscript from 2000 and
a personal communication from 2003. Together with the counting lower
bound, this determined the leading term of the typical threshold degree.

We prove the conjecture of Aspnes, Beigel, Furst and
Rudich~\cite{ABFR} that almost every Boolean function on an even number
$n$ of variables has both threshold degree and weak threshold degree
equal to $n/2$ (Section~\ref{sec:consequences}). We also prove the conjecture of Wang and
Williams~\cite[Section~6.4, Equation~16]{WW} for even $n$. For odd $n$,
we prove their main concentration assertion: almost every Boolean
function has threshold degree either $(n-1)/2$ or $(n+1)/2$. Their
further assertion that the proportions at these two degrees both tend
to $1/2$ remains open.

\begin{theorem}\label{thm:main}
There is an absolute constant $c>0$ such that, for all sufficiently
large $n$, a uniformly random Boolean function $f$ on $n$ variables satisfies
\[
 \Pr\left(\left\lfloor\frac n2\right\rfloor\leq\td(f)
       \leq\left\lceil\frac n2\right\rceil\right)
 \geq1-\exp\left(-c\frac{2^{n/2}}n\right).
\]
In particular, almost every Boolean function on an even number $n$
of variables has threshold degree exactly $n/2$.
\end{theorem}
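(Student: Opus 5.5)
The proof splits into a lower bound, $\td(f)\ge\lfloor n/2\rfloor$, and an upper bound, $\td(f)\le\lceil n/2\rceil$, each holding except with probability $\exp(-c2^{n/2}/n)$. Write $N=2^n$ and $D_d=\sum_{j\le d}\binom nj$.

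\emph{Lower bound.} This is the region-counting argument of Anthony and Alon, made quantitative.
\begin{itemize}
\item The sign patterns on the cube realised strictly by polynomials of degree $\le d$ correspond to regions of a central arrangement of $N$ hyperplanes in $\R^{D_d}$. By Cover's count there are at most $2\sum_{i<D_d}\binom{N-1}{i}$ of them.
\item Take $d=\lfloor n/2\rfloor-1$. Then $D_d\le N/2-c_1N/\sqrt n$, since the central binomial coefficients are of order $N/\sqrt n$ and at least one central layer is missing.
\item A Chernoff bound then gives $\Pr(\td(f)\le d)\le 2^{-N+1}\sum_{i<D_d}\binom{N-1}i\le\exp(-c_2N/n)$. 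This is far smaller than required.
\end{itemize}

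\emph{Upper bound: duality.} Take $d=\lceil n/2\rceil$ and let $V\subseteq\R^{\{0,1\}^n}$ be the space of functions whose Fourier (multilinear) coefficients of degree $\le d$ all vanish. Then $\dim V=M:=\sum_{j>d}\binom nj\le N/2-c_1N/\sqrt n$. By Gordan's alternative, $f$ has no strict degree-$d$ representation if and only if some nonzero $h\in V$ satisfies $f(x)h(x)\ge0$ for all $x$. In other words, $V$ must meet the closed orthant of $f$.

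\emph{Upper bound: decomposition by supports.} For open orthants Cover's bound would finish the proof exactly as before. For closed orthants the difficulty is degeneracy: $V$ may contain vectors with many zeros. These are the codewords of the real analogue of the dual Reed--Muller code $\RM(n-d-1,n)$, which is a Reed--Muller code, with a prescribed erasure set.
\begin{itemize}
\item Decompose by the support $T=\supp(h)$. For $T\subseteq\{0,1\}^n$ let $V_T=\{h\in V:\supp h\subseteq T\}$ and $k_T=\dim V_T$. If the closed orthant of $f$ meets $V$, then for some $T$ the open orthant of $f|_T$ in $\R^T$ meets $V_T$.
\item Applying Cover's count inside $\R^T$ gives
\[
\Pr(\td(f)>d)\le\sum_{T}\Pr_T\le\sum_T 2^{-|T|+1}\sum_{i<k_T}\binom{|T|-1}{i}.
\]
\item It suffices to sum over \emph{minimal} supports, that is, supports of extreme rays of $V\cap(\text{orthant})$. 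Minimality means $k_T\ge1$ and $T$ is a circuit-like set, which cuts the number of $T$ drastically.
\end{itemize}

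\emph{Upper bound: the two regimes.} The sum is controlled through the erasure interpretation: $k_T$ is the dimension of the subcode of $\RM(n-d-1,n)$ that survives erasure of the complement of $T$.
\begin{itemize}
\item \emph{Large $T$.} Sharp-threshold estimates for Reed--Muller codes under random erasures should show that typically $k_T\le |T|/2-c|T|/\sqrt n$ on the relevant scale. The binomial tail is then $\exp(-c|T|/n)$.
\item \emph{Small and intermediate $T$.} The minimum distance $2^{d+1}\approx 2^{n/2}$ forces $|T|\gtrsim 2^{n/2}$, and more generally the generalized Hamming weights $d_r$ force $|T|\ge d_r$ whenever $k_T\ge r$. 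This keeps $k_T$ well below $|T|/2$ and controls the number of candidate supports of each size.
\end{itemize}
The smallest supports, of size about $2^{n/2}$, dominate the total and produce the claimed exponent $c2^{n/2}/n$.

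\emph{Main obstacle.} The hardest step is the union over $T$. The number of minimal supports of size $t$ must be bounded by something like $\exp(o(t/n))\cdot 2^{t}\big/\sum_{i<k}\binom{t-1}{i}$. This needs a counting lemma for low-weight codewords (or minimal supports) of Reed--Muller codes combined with generalized Hamming weight bounds. I would first try grouping the supports $T$ by $r=k_T$ and using $d_r$ together with Kaufman--Lovett--Porat-type weight-distribution counts.

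Finally, the two bounds combine by a union bound to give the theorem. For even $n$ they coincide at $n/2$, which yields the ``in particular'' clause.
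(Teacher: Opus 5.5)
\textbf{Verdict: there is a genuine gap in the upper bound.} Your lower bound is the paper's argument: Cover's count plus a binomial tail. Your Gordan reduction is also correct. The gap is the union bound over supports, and the difficulty there is structural rather than a missing technical lemma.

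Once you restrict to minimal supports, each $T$ is a circuit of the real column matroid of $A$, which has rank $r=\sum_{j\le d}\binom nj=N/2+gN$ with $g\asymp n^{-1/2}$. So $V_T$ is spanned by one vector of full support on $T$. Hence $k_T=1$, not merely $k_T\ge1$, and $\Pr_T=2^{1-|T|}$; your ``large $T$'' discussion of $k_T$ is vacuous. Your bound is therefore
\[
 \Pr(\td(f)>d)\le\sum_{T\text{ circuit}}2^{1-|T|}=2\,\E\bigl[\#\{\text{circuits }T\subseteq U\}\bigr],
\]
with $U$ a uniformly random subset of the cube. This is the first-moment relaxation of $2\Pr(U\text{ contains a circuit})$, and it can be exponentially lossy even in the most benign case. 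If $A$ were in general position, the circuits would be exactly the $(r+1)$-sets. The sum would then be $\binom N{r+1}2^{-r}=2^{(1/2-o(1))N}$. The true failure probability in that case, by Cover's exact count, is $\Pr(\mathrm{Bin}(N-1,1/2)\ge r)\le e^{-\Omega(N/n)}$.

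So your route needs the degree-$d$ matroid on the cube to have exponentially fewer large circuits than a generic matroid. The counting lemma you postpone is therefore the entire upper bound, and nothing you cite supplies it. Sharp-threshold estimates describe random erasure sets, not every support $T$ in a union bound. Generalized weights only give lower bounds on $|T|$. Even your stated target, at most $\exp(o(t/n))\,2^t/\sum_{i<k}\binom{t-1}i$ supports of size $t$, would leave each size class contributing $2\exp(o(t/n))$, which is not small. Finally, the smallest supports, of size $2^{\lceil n/2\rceil+1}$, contribute $\exp(-\Theta(2^{n/2}))$ with no factor $1/n$. They cannot be the source of the exponent $c2^{n/2}/n$.

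\textbf{How the paper avoids this.} It never sums over supports. The Whitney--Zaslavsky formula gives the exact identity $\Pr(\td(f)>d)=2\Pr(|U|-\rank_\R A_U\text{ odd})$. So the failure probability is at most twice the probability that a random half of the columns is dependent: the probability of containing a circuit, not the expected number of circuits. Passing to $\F_2$, this is at most $2B_{\RM(n,n-d-1)}(1/2)$. That single erasure-failure probability is then bounded with decoding tools:
\begin{itemize}
\item The area identity and Bourgain--Kalai, amplified by independent copies, give $h(1/2+g/2)\le\exp(-c\sqrt n\ln n)$.
\item The Pfister--Sprumont--Z\'emor comparison, with Markov's inequality on $|\supp C_{U_q}|$ and $d_j(C)\ge jn^{18/5}$, gives $B_C(1/2+g/4)=o(1)$.
\item Tillich--Z\'emor with minimum distance $D=2^{\lceil n/2\rceil+1}$ yields $B_C(1/2)\le\exp(-\Omega(Dg^2))$. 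This is where the $1/n$ in the exponent comes from.
\end{itemize}
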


Baldi and Vershynin~\cite{BV} used results of Abbe, Shpilka and
Wigderson~\cite{ASW} on Reed--Muller codes to establish linear
independence estimates in their enumeration of polynomial threshold
functions. We pursue this connection for the typical-degree problem.
For the upper bound, we use results on recovering binary codewords when
some of their coordinates have been erased. In Section~\ref{sec:erasure-recovery}, we show that the probability
of failure of sign representation is at most twice the probability that
erasures prevent unique recovery in a particular Reed--Muller code.
We then show that the probability of recovery failure is exponentially
small, using sharp-threshold estimates and bounds on the supports of codewords.

Passing from real polynomials to binary codes gives us more symmetry.
Affine changes of variables over $\F_2$ preserve Reed--Muller codes,
but need not preserve real polynomial degree. We use this additional
symmetry to obtain the recovery estimates needed to reach the central
degrees.

The code used for the central-degree bound has rate (dimension divided
by length) $1/2-g$, where $g=\Theta(n^{-1/2})$, so the recovery estimates
must remain effective as this gap tends to zero. The recovery estimate
has three stages. In
Section~\ref{sec:coordinate-recovery}, we use the area identity as stated
by Kudekar et al.~\cite{KKM} and the derivative estimate of Bourgain
and Kalai~\cite{BK} to bound the probability of failing to recover a
specified coordinate. This bound is not small enough to apply a union
bound over all $2^n$ coordinates. In Section~\ref{sec:whole-recovery},
we develop a comparison between single-coordinate and whole-word recovery
(a bit-to-block comparison) based on the work of Pfister, Sprumont and
Z\'emor~\cite{PSZ}. In Section~\ref{sec:main-proof}, we combine this
comparison with their bounds on generalized Hamming weights to show
that the whole codeword can be recovered with probability tending to
one, even at an erasure probability slightly above $1/2$. Finally, in
the same section, we use the minimum-distance estimate of Tillich and
Z\'emor~\cite{TZ} to obtain an exponential bound at $1/2$, completing
the proof of Theorem~\ref{thm:main}.

In Section~\ref{sec:consequences}, we deduce the conjecture of
Aspnes, Beigel, Furst and Rudich and discuss the remaining assertion of
Wang and Williams. In the same section, we compare our exponential
bound with a lower bound obtained from parity on a subcube.

\section{Threshold degree and erasure recovery}\label{sec:erasure-recovery}

The Reed--Muller code $\RM(n,k)$ consists of the vectors
$(P(x))_{x\in\F_2^n}$, where $P$ ranges over polynomials in $n$ variables, with coefficients
in $\F_2$, of degree at most $k$. Each such vector is a codeword: it
lists the values of $P$ at all $N=2^n$ binary inputs. Coefficients and
evaluations use arithmetic modulo 2.

Since $x_i^2=x_i$ on the binary cube, we may take $P$ to be multilinear
and write
\[
 P(x)=\sum_{\substack{S\subseteq[n]\\ |S|\leq k}}a_Sx_S,
 \qquad x_S=\prod_{i\in S}x_i,\qquad a_S\in\F_2,
\]
where $x_\varnothing=1$.

The squarefree monomials are linearly independent as functions on
$\F_2^n$, so distinct coefficient lists give distinct codewords.
Consequently, if $m_k=\sum_{j=0}^k\binom nj$, then $\RM(n,k)$ has
dimension $m_k$ and contains exactly $2^{m_k}$ codewords. The length
$N=2^n$ counts evaluation points, whereas $m_k$ counts coefficients.

For a linear code, the minimum distance is the least number of non-zero
entries in a non-zero codeword. We will also need the standard fact that
$\RM(n,k)$ has minimum distance $2^{n-k}$; see, for example,
Kudekar et al.~\cite[Section~I-B]{KKM}.

We consider the problem of recovering a codeword when some of its
coordinates have been erased. The values at the remaining coordinates
are known, and we ask whether they uniquely determine the whole codeword.
Let $V$ be the set of these visible coordinates. Let $E_V$ be the matrix
whose rows are indexed by $x\in V$, whose columns are indexed by
$S\subseteq[n]$ with $|S|\leq k$, and whose $(x,S)$ entry is $x_S$.
Recovering the codeword is equivalent to recovering its coefficient list.
Writing $a=(a_S)_{|S|\leq k}$ for the vector of coefficients of $P$,
the visible values give the linear system
\[
 E_Va=(P(x))_{x\in V}\qquad\text{over }\F_2.
\]
Recovery is possible precisely when this system has a unique solution.
Since the system is consistent, this is equivalent to
$\rank_{\F_2}E_V=m_k$. Fewer than $m_k$ visible values cannot suffice.
With more visible values, only $m_k$ independent equations are needed;
the remaining equations may be redundant.

For example, a word of $\RM(3,1)$ is the list of eight values of
$P(x)=a_\varnothing+a_1x_1+a_2x_2+a_3x_3$. The values at
$000,100,010,001$ determine all four coefficients: the first gives
$a_\varnothing$, and adding it to each of the other three values gives
$a_1,a_2,a_3$. By contrast, the four values at $000,100,010,110$ leave
$a_3$ undetermined. Thus the number of visible values alone does not
decide whether recovery is possible.

For a binary linear code $C$, erasing a coordinate means that its value
is unknown, although its position is known. Recovery is possible if the
values at the unerased coordinates determine exactly one codeword in
$C$. Let $U_p$ be the random set of erased coordinates, with each
coordinate erased independently with probability $p$.
The support $\supp w$ of a codeword $w$ is the set of coordinates
where it is non-zero. For a linear subspace $E\subseteq C$, its support
$\supp E$ is the union of the supports of its codewords.
For a set $U$ of coordinates, put
\[
 C_U=\{w\in C:\supp w\subseteq U\}.
\]
Thus $C_U$ consists of the codewords whose non-zero entries occur only
at coordinates in $U$.
If $U$ is the erased set, linearity means that adding words of $C_U$
to a given codeword produces exactly the codewords with the same
unerased entries. Recovery therefore fails precisely when $C_U$
contains a non-zero word.
We denote the probability of this failure by $B_C(p)$, when each
coordinate is erased independently with probability $p$.

The next lemma connects this binary recovery problem to real polynomial
sign representation. Its conclusion is a bound on probabilities, rather
than a procedure for recovering a real representing polynomial. The
binary code used in the lemma is $\RM(n,n-d-1)$, whose codewords are
the evaluation vectors of polynomials over $\F_2$ of degree at most
$n-d-1$.

\begin{lemma}\label{lem:regions}
For every integer $d$ with $1\leq d<n$, and a uniformly random Boolean
function $f$ on $n$ variables,
\[
 \Pr(\td(f)>d)\leq2B_{\RM(n,n-d-1)}(1/2).
\]
\end{lemma}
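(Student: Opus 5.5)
The plan is to count regions of a real hyperplane arrangement with Zaslavsky's theorem in Tutte-polynomial form, and then compare the real matroid with its binary counterpart.

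\textbf{Step 1: reduce to counting regions.} Let $M$ be the real $N\times m_d$ matrix with rows indexed by $x\in\{0,1\}^n$, columns indexed by $S$ with $|S|\le d$, and $(x,S)$ entry $x_S$. Here $m_d=\sum_{j\le d}\binom nj$. A function $f$ has $\td(f)\le d$ exactly when some coefficient vector $a\in\R^{m_d}$ has $\operatorname{sign}(Ma)_x=f(x)$ for every $x$. So $f$ is representable exactly when the sign vector of $f$ labels a region of the central arrangement $\mathcal H=\{H_x\}_{x}$ in $\R^{m_d}$, where $H_x=\{a:(Ma)_x=0\}$. Distinct regions give distinct sign vectors, so
\[
\Pr(\td(f)>d)=1-r(\mathcal H)/2^N.
\]

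\textbf{Step 2: evaluate $r(\mathcal H)$ via the real matroid.} Let $\mathcal M$ be the real matroid on the rows of $M$, with rank function $\rho$ and total rank $\rho([N])=m_d$. Zaslavsky's theorem gives $r(\mathcal H)=T_{\mathcal M}(2,0)$. Expanding the Tutte polynomial $T(x,y)=\sum_{A}(x-1)^{m_d-\rho(A)}(y-1)^{|A|-\rho(A)}$ at $(2,0)$ yields
\[
r(\mathcal H)=\sum_{A\subseteq\{0,1\}^n}(-1)^{|A|-\rho(A)}.
\]
Every independent set $A$ has nullity $0$ and contributes $+1$. Every dependent set contributes at least $-1$. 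Therefore
\[
2^N-r(\mathcal H)=\sum_A\bigl(1-(-1)^{|A|-\rho(A)}\bigr)\le 2\,\#\{A:\text{rows of $M$ indexed by $A$ are dependent over }\R\}.
\]
This is the source of the factor $2$ in the statement.

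\textbf{Step 3: pass from $\R$ to $\F_2$.} The matrix $M$ has $0/1$ entries. Suppose the rows indexed by $A$ are linearly independent over $\F_2$. Then some $|A|\times|A|$ minor is odd, hence a nonzero integer, so the rows are independent over $\R$. Contrapositively, real dependence implies $\F_2$-dependence.

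Now $\F_2$-dependence of the rows indexed by $A$ means there is a nonzero $w\in\F_2^N$ with $\supp w\subseteq A$ and $\sum_x w_x x_S=0$ for all $|S|\le d$. In other words, $w$ lies in $\RM(n,d)^\perp$ and is supported in $A$. Using the standard duality $\RM(n,d)^\perp=\RM(n,n-d-1)$, this says exactly that $C_A\neq\{0\}$ for $C=\RM(n,n-d-1)$.

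\textbf{Step 4: conclude.} A uniformly random subset $A$ of the cube has the law of $U_{1/2}$, and recovery fails precisely when $C_{U}$ contains a nonzero word. Hence the number of such sets $A$ is $2^N B_C(1/2)$, and combining Steps 1--3 gives
\[
\Pr(\td(f)>d)\le 2B_{\RM(n,n-d-1)}(1/2).
\]

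\textbf{Main obstacle.} The delicate step is Step 2. It needs the exact Zaslavsky identity $r=T(2,0)$ for a central arrangement that may contain repeated or degenerate hyperplanes. It also needs care with the sign in the subset expansion, so that dependent sets cost at most $2$ each. If this identity is awkward to invoke in the needed form, my fallback would be the deletion--contraction recursion $r(\mathcal H)=r(\mathcal H\setminus H)+r(\mathcal H/H)$ together with induction on $N$. The duality $\RM(n,d)^\perp=\RM(n,n-d-1)$ I would cite as standard, for instance from Kudekar et al.~\cite{KKM}.
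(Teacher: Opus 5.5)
Your proposal is correct and takes essentially the same route as the paper: the Whitney--Zaslavsky subset expansion $\sum_A(-1)^{|A|-\rho(A)}$ (which you write as $T_{\mathcal M}(2,0)$), the observation that only dependent sets create a deficit and each contributes at most $2$, the odd-minor argument passing from real to $\F_2$ dependence, and the identification $\RM(n,d)^\perp=\RM(n,n-d-1)$. Two small differences: the paper proves that duality directly (orthogonality plus a dimension count) where you cite it, and your matroid formulation handles repeated hyperplanes automatically, so the step you flag as the main obstacle is standard and causes no difficulty here.
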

\begin{proof}
Since $x_i^2=x_i$ on $\{0,1\}^n$, every real polynomial can be made
multilinear without changing its values there or increasing its degree.
Let $A$ be the integer matrix with rows indexed by
$S\subseteq[n]$ with $|S|\leq d$, columns indexed by
$x\in\{0,1\}^n$, and entries $A_{S,x}=x_S=\prod_{i\in S}x_i$.
Write $A_U$ for the submatrix with columns in $U$.
In the real coefficient space, with one coordinate $a_S$ for each
monomial $x_S$, every input $x$ defines a hyperplane
$\sum_{|S|\leq d}a_Sx_S=0$. The collection of these hyperplanes is a
hyperplane arrangement. The hyperplanes are distinct. If two inputs
defined the same hyperplane, their normal vectors would be proportional.
Both have entry $1$ at the constant monomial, so they would be equal,
and their entries at $x_1,\ldots,x_n$ would show that the inputs are equal.
The arrangement is central, meaning that all its hyperplanes pass through
the origin. The coefficient vectors representing a fixed
function form an intersection of open half-spaces, which, when non-empty,
is one region of the arrangement~\cite{BV}. Thus the number $T$ of
functions represented at degree at most $d$ is the number of regions.

The Whitney and Zaslavsky formulae~\cite[Theorems~2.4 and~2.5]{Stanley}
give
\[
 T=\sum_{U\subseteq\{0,1\}^n}(-1)^{|U|-\rank_{\R}A_U},
\]
where $\rank_{\R}$ denotes rank over $\R$. Each subset of the cube occurs
as $U_{1/2}$ with probability $2^{-N}$, so dividing this sum by $2^N$ gives
\[
 \frac T{2^N}=\E\left[(-1)^{|U|-\rank_{\R}A_U}\right],
 \qquad\text{where }U=U_{1/2}.
\]
The quantity inside the expectation is $1$ when
$|U|-\rank_{\R}A_U$ is even and $-1$ when it is odd. Its expectation
is therefore one minus twice the probability of the odd case. Since $T$ counts
the functions of threshold degree at most $d$, it follows that
\[
 \Pr(\td(f)>d)=1-\frac T{2^N}
 =2\Pr\bigl(|U|-\rank_{\R}A_U\text{ is odd}\bigr).
\]
An odd value of $|U|-\rank_{\R}A_U$ is positive, so in this case the
selected columns are linearly dependent over $\R$. As noted by Baldi and
Vershynin~\cite{BV}, if these columns are dependent over $\R$, they
are also dependent over $\F_2$: a minor non-zero over $\F_2$ is an odd
integer and hence non-zero over $\R$. Hence
\[
 \Pr(\td(f)>d)
 \leq2\Pr\bigl(\text{the columns of }A_U
                         \text{ are dependent over }\F_2\bigr).
\]

The row space of $A$ over $\F_2$ is $\RM(n,d)$, because its rows
are the evaluation vectors of the monomials of degree at most $d$.
A product of two monomials of degrees at most $d$ and $n-d-1$,
respectively, omits at least one variable, so its sum over the cube is
even and hence zero over $\F_2$. The codes $\RM(n,d)$ and
$\RM(n,n-d-1)$ are therefore orthogonal over $\F_2$.
Put $C=\RM(n,n-d-1)$. Orthogonality gives $C\subseteq\ker_{\F_2}A$.
By binomial symmetry,
\[
 \dim C=\sum_{j=0}^{n-d-1}\binom nj
       =\sum_{j=d+1}^n\binom nj=N-m_d.
\]
Since $\RM(n,d)$ has dimension $m_d$, the matrix $A$ has rank $m_d$.
Its kernel therefore has dimension $N-m_d$ by rank--nullity.
Thus $C=\ker_{\F_2}A$.
For a fixed $U$, a non-zero word in $C_U$ is exactly a non-zero vector
$w$ with $Aw=0$ over $\F_2$ and $w_x=0$ outside $U$. Multiplication
by $w$ forms a linear combination of columns of $A$, with coefficients $w_x$.
These conditions therefore say precisely that the columns of $A_U$
are dependent over $\F_2$.

Taking $U=U_{1/2}$, the event that the columns of $A_U$ are dependent
over $\F_2$ is exactly the event that $C_{U_{1/2}}$ contains a non-zero
word. Its probability is $B_C(1/2)$, giving
$\Pr(\td(f)>d)\leq2B_C(1/2)$.
\end{proof}

\section{Recovery of a single coordinate}\label{sec:coordinate-recovery}

Let $C=\RM(n,k)$, whose coordinates are indexed by $\F_2^n$.
We next estimate the probability that the value at a specified erased
coordinate cannot be determined from the values at the unerased
coordinates. Suppose that the entry at coordinate $x$ is unavailable, and erase
each other entry independently with probability $t$. We cannot
recover the entry at $x$ precisely when two codewords agree at every
visible coordinate but differ at $x$. Since $C$ is linear, their
difference is a codeword $w$ with $w_x=1$, whose other non-zero entries
all lie at erased coordinates. Conversely, adding such a word to any
codeword changes its entry at $x$ without changing any visible entry.
The probability of failure to recover $x$ is therefore
\[
 h_x(t)=\Pr\bigl(\exists w\in C:\ w_x=1,\
                          \supp w\subseteq U_t\cup\{x\}\bigr).
\]

In the terminology of Kudekar et al.~\cite[Proposition~4(a)]{KKM},
$h_x$ is the EXIT function associated with coordinate $x$.
For Reed--Muller codes, the functions $h_x$ are identical for all
coordinates $x$~\cite[Proposition~7 and Lemma~23]{KKM}.
Write $h$ for this common function. In this definition the coordinate being recovered
is already unavailable, and $t$ is the erasure probability for the other
coordinates. Thus $h(t)$ is the probability of recovery failure given
that the specified coordinate is erased.

A coordinate $x$ belongs to $\supp C_{U_t}$ precisely when it is erased
and its value cannot be recovered from the unerased coordinates. This
means that there is a codeword $w$ with $w_x=1$ whose entire support
lies in $U_t$. The probability that $x$ is erased is $t$. Given this
erasure, the probability of recovery failure is $h(t)$, since the other
coordinates are still erased independently with probability $t$. Hence
\begin{equation}\label{eq:coordinate-support}
 \Pr\bigl(x\in\supp C_{U_t}\bigr)=t h(t).
\end{equation}

We use the following derivative estimate from the argument in the
proof of Theorem~27 of Kudekar et al.~\cite{KKM}.

\Needspace{9\baselineskip}
\begin{lemma}\label{lem:derivative}
For $C=\RM(n,k)$, where $0\leq k<n$, let $h$ be the common
coordinate-failure probability defined above. There is an absolute
constant $c_1>0$ such that, for all sufficiently large $n$, uniformly
in $k$,
\begin{equation}\label{eq:enhanced}
 h'(t)\geq c_1 n\ln n\,h(t)(1-h(t))
                  \qquad(1/4\leq t\leq3/4).
\end{equation}
\end{lemma}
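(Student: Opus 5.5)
Fix a coordinate, say $x=0$, and regard $h(t)$ as the probability of a monotone event $\mathcal A$ on the product space $\{0,1\}^{N-1}$: the erased set $U_t\setminus\{0\}$ determines whether some $w\in C$ with $w_0=1$ has $\supp w\subseteq U_t\cup\{0\}$. Enlarging the erased set preserves this event, so $\mathcal A$ is monotone increasing in the erasure pattern. By Russo--Margulis,
\[
 h'(t)=\sum_{y\neq 0} I_y(t),
\]
where $I_y(t)$ is the probability that coordinate $y$ is pivotal for $\mathcal A$ under the $t$-biased measure. The plan is to lower-bound this total influence by combining symmetry with the Bourgain--Kalai theorem for influences of functions invariant under a group action.

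The symmetry comes from the affine group of $\F_2^n$, which preserves $\RM(n,k)$. Its stabiliser of $0$ is $GL_n(\F_2)$, and this group acts on $\F_2^n\setminus\{0\}$ and leaves $\mathcal A$ invariant. Hence $\mathcal A$ is invariant under a transitive group, namely $GL_n(\F_2)$ acting on the nonzero vectors. This is the setting of Kudekar et al.\ in their Theorem~27. For a transitive group, Friedgut--Kalai only yields total influence of order $\ln N\cdot h(1-h)=\Theta(n)\,h(1-h)$. To gain the extra factor $\ln n$, I would use Bourgain--Kalai. For a monotone event invariant under a group $G$ acting on $M$ points, their theorem gives
\[
 \sum_y I_y\geq c\,(\ln M)\,h(1-h)\,\phi(M),
\]
where $\phi$ depends on how small the orbits of $G$ on small subsets are. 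For $GL_n(\F_2)$, subsets of size $r$ have orbits of size at least roughly $2^{nr}/r!$ once $r\le n$. This is the structure that Bourgain--Kalai show produces an extra factor like $\ln\ln M$ in the exponent, and with $\ln M\asymp n$ and $\ln\ln M\asymp\ln n$ this gives the bound $c_1 n\ln n\,h(1-h)$. The restriction $1/4\le t\le 3/4$ is used to keep the $t$-biased measure uniformly non-degenerate, so that the constants in the biased Bourgain--Kalai inequality (and in the Russo--Margulis normalisation, which carries factors like $t(1-t)$) stay absolute.

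The steps, in order, are as follows.

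\begin{enumerate}
\item Verify that $\mathcal A$ is monotone and invariant under $GL_n(\F_2)$. Invariance follows because a linear map $L$ sends codewords $w$ to $w\circ L^{-1}$, still in $C$ with the same value at $0$.
\item Apply Russo--Margulis to express $h'(t)$ as the total influence.
\item Check the orbit-size hypothesis of Bourgain--Kalai for $GL_n(\F_2)$ acting on $\F_2^n\setminus\{0\}$. Sets of $r$ linearly independent vectors have orbits of size at least $\prod_{i<r}(2^n-2^i)$. Dependent sets lie in a smaller span, and one bounds them by passing to a basis subset, at a loss of a factor at most $2^{r^2}$.
\item Invoke the theorem with $M=2^n-1$, tracking that nothing depends on $k$. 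The group and the action are the same for all $k$, so uniformity in $k$ is automatic.
\end{enumerate}

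The main obstacle is step 3 together with matching the precise form of Bourgain--Kalai's hypothesis. Their theorem is stated for $p=1/2$ and for specific group classes, so one must either cite the $p$-biased extension (as Kudekar et al.\ do, noting the argument goes through uniformly for $t$ bounded away from $0$ and $1$) or reduce to it. I would follow the proof of Theorem~27 of Kudekar et al.\ essentially verbatim, which already establishes the bound with constant independent of $k$ for $t$ in a fixed compact subinterval of $(0,1)$.
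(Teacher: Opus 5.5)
Your setup matches the paper's: monotonicity of $\mathcal A$, invariance under $\mathrm{GL}(n,\F_2)$ acting transitively on $\F_2^n\setminus\{0\}$, and orbit bounds feeding Bourgain--Kalai to gain the $\ln n$ factor. The gap is that applying Bourgain--Kalai directly to $\mathcal A$ does not prove the lemma as stated. Their Theorem~1 is stated without restriction on $\mu_u(\mathcal A)$, but its proof assumes $\rho=\mu_u(\mathcal A)(1-\mu_u(\mathcal A))>(\ln M)^{-1}$ (their equation~(0.25)). The argument in the proof of Theorem~27 of Kudekar et al.\ has the same limitation, so following it verbatim inherits it. Restricting $t$ to $[1/4,3/4]$ only keeps the bias $t(1-t)$ away from zero; it says nothing about the variance $h(t)(1-h(t))$ of the event, and your proposal conflates the two. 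Your route therefore gives the inequality only where $h(t)(1-h(t))\gtrsim1/n$. The lemma claims it for every $t\in[1/4,3/4]$, whatever the value of $h(t)$. That small-variance regime is exactly where it is used: Lemma~\ref{lem:coordinate} integrates $(\ln h)'\geq c_1n\ln n\,(1-h)$ over $[q,3/4]$ to conclude $h(q)\leq\exp(-c\sqrt n\ln n)$. An inequality valid only for $h\gtrsim1/n$ would give at best $h(q)=O(1/n)$, which is far too weak for the bound $Nh(q)/K=o(1)$ in Section~\ref{sec:main-proof}.

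The missing idea is to amplify by independent copies, so that Bourgain--Kalai is only ever applied to events of probability in $[1/4,3/4]$. Fix $t_0$ with $\alpha=h(t_0)\in(0,1/2]$ and take $r=\lfloor1/\alpha\rfloor$ disjoint copies of $\F_2^n\setminus\{0\}$. Let $\mathcal E_r$ be the event that recovery fails in at least one copy; then $\beta(u)=1-(1-h(u))^r$ satisfies $\beta(t_0)\in[1/4,3/4]$. This event is monotone. It is invariant under the transitive group $H$ generated by independent $\mathrm{GL}(n,\F_2)$ actions on the copies and permutations of the copies. The orbit bound $|\operatorname{Orb}_H(T)|\geq\prod_i(1+s_i)^{n/2}\geq(1+s)^{n/2}$ is uniform in $r$, so $a_1(H)\geq(n/4)\ln n$ exactly as for $\mathrm{GL}(n,\F_2)$ itself, and the size condition on $M=r(N-1)\geq N-1$ only becomes easier. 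Bourgain--Kalai then gives $r\,h'(t_0)\geq\beta'(t_0)\geq c_1n\ln n$, and $1/r\geq\alpha$ yields $h'(t_0)\geq c_1n\ln n\,\alpha(1-\alpha)$. For $\alpha\in(1/2,1)$, use instead the event that recovery fails in every copy, with $r=\lfloor1/(1-\alpha)\rfloor$. Without this step, or some other argument valid at small variance, your step~4 does not cover the range the lemma asserts and later needs.
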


The proof of Lemma~\ref{lem:derivative} is given at the end of this section.

\Needspace{11\baselineskip}
We combine this derivative bound with an area identity to estimate $h$
below its transition. The derivative bound controls the sharpness of
the transition, while the area identity supplies information about its
location.

Suppose that $\dim C/N=1/2-g$, where $0<g\leq1/6$.
Proposition~4(c) of Kudekar et al.~\cite{KKM} gives
\begin{equation}\label{eq:area}
 \int_0^1h(t)\,dt=\frac{\dim C}{N}=\frac12-g.
\end{equation}
A step function that jumps from zero to one at $a$ has integral $1-a$.
Thus, if $h$ were a step function, the area identity would place its jump
at $1/2+g$, and $q=1/2+g/2$ would lie a distance $g/2$ below the jump.
For the actual function $h$, the area identity alone does not show
$h(q)$ to be small. We use the derivative estimate to make the comparison
quantitative.

\Needspace{9\baselineskip}
\begin{lemma}\label{lem:coordinate}
For $C=\RM(n,k)$, where $0\leq k<n$, let $h$ be as above. Suppose that
$\dim C/N=1/2-g$, where $0<g\leq1/6$.
Then, for all sufficiently large $n$, uniformly in $k$,
\[
 h\left(\frac12+\frac g2\right)
 \leq\exp\left(-\frac{c_1}{5}ng\ln n\right),
\]
where $c_1$ is the absolute constant in Lemma~\ref{lem:derivative}.
\end{lemma}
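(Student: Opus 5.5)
Write $q=1/2+g/2$ and $L=c_1n\ln n$. The plan is to turn the derivative estimate \eqref{eq:enhanced} into a lower bound on the area under $h$ when $h(q)$ is not tiny. If $h(q)$ is not exponentially small, $h$ climbs to nearly $1$ well before $1/2+g$, and then $\int_0^1 h$ exceeds $1/2-g$, contradicting the area identity \eqref{eq:area}.

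Set $\phi(t)=\ln\bigl(h(t)/(1-h(t))\bigr)$, the logit of $h$. Wherever $0<h<1$, Lemma~\ref{lem:derivative} gives $\phi'(t)\geq L$ on $[1/4,3/4]$. If $h(q)=0$ there is nothing to prove. We may also assume $h(q)<1$; otherwise $h\equiv1$ on $[q,1]$, since $h$ is nondecreasing by monotonicity of the failure event. Then $\int_0^1h\geq 1-q=1/2-g/2>1/2-g$, a contradiction.

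So suppose $0<h(q)<1$, and hence $0<h<1$ on a neighbourhood of $[q,3/4]$. Integrating $\phi'\geq L$ from $q$ to $t$ gives
\[
 \frac{1-h(t)}{h(t)}\leq\frac{1-h(q)}{h(q)}e^{-L(t-q)},
 \qquad\text{so}\qquad 1-h(t)\leq\frac{e^{-L(t-q)}}{h(q)} \quad (q\leq t\leq 3/4).
\]
We use $g\leq1/6$ to ensure $q+g/2=1/2+g\leq 3/4$, so this interval lies inside $[1/4,3/4]$ where the derivative bound holds.

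Now bound the area from below using $h\geq0$ on $[0,q]$:
\[
 \int_0^1 h\geq\int_q^1h=(1-q)-\int_q^1(1-h).
\]
Split $\int_q^1(1-h)$ into three pieces:
\begin{itemize}
\item on $[q,q+s]$, bound the integrand by $1$, contributing at most $s$;
\item on $[q+s,3/4]$, use the exponential bound above, contributing at most $e^{-Ls}/(Lh(q))$;
\item on $[3/4,1]$, the integrand is at most $1-h(3/4)$, which is at most $e^{-L(3/4-q)}/h(q)$.
\end{itemize}
Combining with \eqref{eq:area} and $1-q=1/2-g/2$:
\[
 \frac12-g\geq\frac12-\frac g2-s-\frac{e^{-Ls}}{h(q)}\Bigl(\frac1L+\frac14\Bigr),
\]
using $3/4-q\geq s$. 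Choose $s=g/4$. This yields $g/4\leq e^{-Lg/4}/h(q)\cdot(1/L+1/4)$, hence
\[
 h(q)\leq\frac{4}{g}\Bigl(\frac1L+\frac14\Bigr)e^{-c_1ng\ln n/4}\leq\frac{2}{g}\,e^{-c_1ng\ln n/4}
\]
for large $n$.

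The remaining step, and the main obstacle, is absorbing the prefactor $2/g$ into the gap between the exponents $1/4$ and $1/5$. This needs $\ln(2/g)\leq c_1ng\ln n/20$.

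This holds when $g$ is not too small, for instance $g\geq n^{-1/2}$, the regime of interest. For arbitrarily small $g$ it fails, so uniformity there is the delicate point. The dimension of $\RM(n,k)$ is a sum of binomials divided by $2^n$, so $g>0$ is bounded below in terms of $n$: when $n$ is odd, $g\geq\binom{n}{(n+1)/2}/2^{n+1}\gtrsim n^{-1/2}$. When $n$ is even, $\dim C/N=1/2$ exactly is excluded by $g>0$, so either $g\geq\tfrac12\binom{n}{n/2}2^{-n}$ or $g$ is larger still. In all cases $g\gtrsim n^{-1/2}$.

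Then $\ln(2/g)=O(\ln n)$, while $ng\ln n\gtrsim\sqrt n\ln n$, so the prefactor is absorbed for large $n$ uniformly in $k$. This gives the stated bound $\exp(-\tfrac{c_1}{5}ng\ln n)$. If a cleaner route is wanted, one can instead choose $s$ slightly larger than $g/4$ (e.g. $s=g/2-\epsilon$) so the prefactor is handled directly, but I would first try the binomial lower bound on $g$.
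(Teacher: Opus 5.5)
Your proof is correct, but it is organized differently from the paper's. The paper divides the derivative bound by $h$ alone, obtaining $(\ln h)'\geq c_1n\ln n\,(1-h)$ on $[q,3/4]$, so the area deficit enters the exponent directly. The area identity gives $\int_q^1(1-h)\geq g/2$, and monotonicity of $1-h$ transfers this to $\int_q^{3/4}(1-h)\geq g/5$. Integrating and using $h(3/4)\leq1$ then gives the bound with no prefactor, for every $g\in(0,1/6]$.

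You instead use the full factor $h(1-h)$ to get logit growth. You convert that growth into pointwise decay of $1-h$, at the cost of a factor $1/h(q)$, and only then invoke the area identity. This yields a better exponent, $h(q)\leq(2/g)e^{-c_1ng\ln n/4}$. However, the prefactor can be absorbed only when $ng\ln n$ is large, so you need an extra input: a positive $g$ for $\RM(n,k)$ is at least $\binom{n}{\lfloor n/2\rfloor}2^{-n-1}$, which is of order $n^{-1/2}$. That fact is correct and makes your argument uniform in $k$; the paper uses the same order of $g$ only later, in Section~\ref{sec:main-proof}. Incidentally, $g=0$ occurs for odd $n$ (at $k=(n-1)/2$), not for even $n$, though your lower bound is valid in both cases.

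The alternative you suggest at the end does not remove the dependence on $g$. Write $L=c_1n\ln n$ as in your notation and take $s=g/2-\epsilon$. Your inequality then gives $h(q)\leq\epsilon^{-1}(1/L+1/4)e^{L\epsilon-Lg/2}$, which is below $e^{-Lg/5}$ only once $Lg$ exceeds a constant multiple of $\ln L$. Avoiding any lower bound on $g$ is precisely what the paper gains by keeping $1-h$ inside the integral.
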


\begin{proof}
Put $q=1/2+g/2$. Since $0<g\leq1/6$, we have $1/2<q\leq7/12$,
so Lemma~\ref{lem:derivative} applies throughout $[q,3/4]$.
We first bound the area of $1-h$ above $q$:
\begin{equation}\label{eq:missing-area}
 \int_q^1(1-h(t))\,dt
 =1-q-\int_q^1h(t)\,dt
 \geq1-q-\left(\frac12-g\right)=\frac g2,
\end{equation}
where the inequality uses~\eqref{eq:area} and the fact that $h$ is
non-negative.

Since $1-h$ is non-increasing, its average over $[q,3/4]$ is at least
its average over $[q,1]$. Hence
\[
 \int_q^{3/4}(1-h(t))\,dt
 \geq\frac{3/4-q}{1-q}\int_q^1(1-h(t))\,dt
 \geq\frac g5,
\]
where the second inequality uses~\eqref{eq:missing-area} and
$(3/4-q)/(1-q)=(1-2g)/(2(1-g))\geq2/5$, which holds since $g\leq1/6$.

To divide by $h(t)$ and take its logarithm, we first show that $h(t)>0$
for $0<t\leq1$. The constant polynomial $1$ gives the all-$1$ codeword
in $C$. If all $N-1$ entries other than $x$ are erased, we cannot
distinguish this codeword from the zero codeword at the already
unavailable coordinate $x$. Thus $h(t)\geq t^{N-1}>0$.
The remaining area forces substantial growth in $\ln h$: the derivative
inequality gives $(\ln h)'\geq c_1n\ln n\,(1-h)$.
Since $h$ can never exceed one, this growth is possible only if its
starting value $h(q)$ is sufficiently small. Dividing~\eqref{eq:enhanced}
by $h(t)$ and integrating from $q$ to $3/4$ therefore gives
\[
 \ln h(3/4)-\ln h(q)
 \geq c_1n\ln n\int_q^{3/4}(1-h(t))\,dt
 \geq\frac{c_1}{5}ng\ln n.
\]
Since $h(3/4)\leq1$, exponentiating proves the bound.
\end{proof}

\Needspace{4\baselineskip}
\begin{proof}[Proof of Lemma~\ref{lem:derivative}]
The derivative estimate we need appears in the proof of Theorem~27
of Kudekar et al.~\cite{KKM}. Their argument invokes Bourgain and
Kalai~\cite{BK}, whose proof assumes a lower bound on the variance
$h(t)(1-h(t))$ (equation~(0.25)). Our application requires the estimate
even when $h(t)(1-h(t))$ is very small. We therefore supply the
additional argument needed in this range, using independent copies,
and obtain constants independent of $k$.

The independent copies produce an event whose probability is bounded
away from both zero and one. We apply the derivative estimate to this
event and transfer the bound back to $h$. The symmetry bound needed
for this argument holds independently of the number of copies.

Let $\Omega=\F_2^n\setminus\{0\}$, the set of labels of the $N-1$
coordinates other than $0$, and let $\mathcal E$ be the family of erased
sets that prevent recovery at $0$. Write $\mu_t$ for the measure obtained
by including each coordinate independently with probability $t$, so that
$h(t)=\mu_t(\mathcal E)$. The family $\mathcal E$ is monotone, and hence
$h'(t)\geq0$. By Lemma~26 of Kudekar et al.~\cite{KKM} and its proof,
$\mathcal E$ is invariant under the natural action $x\mapsto Lx$ of
$G=\mathrm{GL}(n,\F_2)$ on $\Omega$. This action is transitive.

For a permutation group $\Gamma$ acting on a finite coordinate set $\Lambda$
and a subset $T\subseteq\Lambda$, write
$\operatorname{Orb}_\Gamma(T)=\{g(T):g\in\Gamma\}$ for the orbit of $T$ under $\Gamma$,
where $g(T)=\{g(x):x\in T\}$. Thus $|\operatorname{Orb}_\Gamma(T)|$ counts
the distinct sets obtained by applying elements of $\Gamma$ to $T$. Put
\[
 \varphi_\Gamma(s)=\min_{|T|=s}\ln|\operatorname{Orb}_\Gamma(T)|,
 \qquad
 a_1(\Gamma)=\sup\{\varphi_\Gamma(s):\varphi_\Gamma(s)>s^2\},
\]
where $s$ ranges over $1,\ldots,|\Lambda|$. These are the parameters
in equations~(0.11)--(0.12) of Bourgain and Kalai~\cite{BK}, with
$\tau=1$. For a monotone family $\mathcal A$ invariant under a transitive
group $\Gamma$, their Theorem~1, equations~(0.13)--(0.14), states that
\[
 \frac{d}{du}\mu_u(\mathcal A)
 \geq c_{\mathrm{BK}}a_1(\Gamma)\,
       \mu_u(\mathcal A)(1-\mu_u(\mathcal A)),
\]
under the condition $\ln(1/[u(1-u)])\leq c_0\ln\ln M$, where
$M=|\Lambda|$ and $c_0,c_{\mathrm{BK}}>0$ are absolute constants.
The theorem is stated without restriction on $\mu_u(\mathcal A)$,
but its proof assumes $\rho>(\ln M)^{-1}$, where
$\rho=\mu_u(\mathcal A)(1-\mu_u(\mathcal A))$. See equation~(0.25)
there. We apply the estimate only for $u\in[1/4,3/4]$ and
$\mu_u(\mathcal A)\in[1/4,3/4]$. Then $\rho\geq3/16$, so both
conditions hold for all sufficiently large $M$.

We first show that $a_1(G)\geq(n/4)\ln n$, and then establish the
same lower bound for the symmetry group of the independent copies.

Let $T\subseteq\Omega$ have $s\leq n/2$ elements, and put
$\ell=\dim\operatorname{span}(T)$.
Since an $\ell$-dimensional binary space has $2^\ell-1$ non-zero
vectors, $s\leq2^\ell-1$, and hence $\log_2(s+1)\leq\ell\leq s$.
The group $G$ takes $\operatorname{span}(T)$ to every $\ell$-dimensional
subspace. Distinct image subspaces require distinct images of $T$, so
its orbit has at least as many members as there are such subspaces.
There are at least $2^{\ell(n-\ell)}$ of these: the graphs of the
linear maps from $\F_2^\ell$ to $\F_2^{n-\ell}$ give that many distinct
subspaces. Consequently,
\[
 |\operatorname{Orb}_G(T)|
 \geq2^{\ell(n-\ell)}\geq(s+1)^{n/2}.
\]
This holds for every such $T$, and therefore
$\varphi_G(s)\geq(n/2)\ln(s+1)$. Taking $s=\lfloor\sqrt n\rfloor$,
we have, for all sufficiently large $n$,
\[
 \varphi_G(s)>\frac n4\ln n>n\geq s^2.
\]
This value qualifies in the definition of $a_1(G)$, giving
$a_1(G)\geq(n/4)\ln n$.

We now combine independent copies of the failure event to obtain an
event whose probability lies in $[1/4,3/4]$. Fix $t_0\in[1/4,3/4]$,
and put $\alpha=h(t_0)$. If $\alpha\in\{0,1\}$, the required inequality
follows from $h'(t_0)\geq0$. Suppose first that $0<\alpha\leq1/2$,
and choose $r=\lfloor1/\alpha\rfloor$. Take $r$ independent repetitions
of the recovery problem, with disjoint coordinate sets
$\Omega_1,\ldots,\Omega_r$, each a copy of $\Omega$. In every repetition,
erase each coordinate independently with probability $u$. Let
$\mathcal E_r$ be the event that recovery fails in at least one
repetition. Since the repetitions are independent, its probability is
\[
 \beta(u)=\mu_u(\mathcal E_r)=1-(1-h(u))^r.
\]
We have $1/2\leq r\alpha\leq1$. Also, concavity gives
$\ln(1-\alpha)\geq-2\alpha\ln2$ for $0\leq\alpha\leq1/2$. Thus
\[
 \frac14\leq e^{-2r\alpha\ln2}\leq(1-\alpha)^r
 \leq e^{-r\alpha}\leq e^{-1/2},
\]
and hence $\beta(t_0)\in[1/4,3/4]$. Holding $r$ fixed as $u$ varies,
direct differentiation gives
\[
 \beta'(t_0)=r(1-\alpha)^{r-1}h'(t_0)\leq r\,h'(t_0).
\]
A lower bound for $\beta'(t_0)$ will therefore give one for $h'(t_0)$.

To apply Bourgain--Kalai to $\mathcal E_r$, we need the corresponding
symmetry bound. Let $H$ be the group generated by independent actions
of $G$ within the copies and by permutations of the copies. Both
operations preserve $\mathcal E_r$, which is monotone, and $H$ acts
transitively on the enlarged coordinate set. The enlarged coordinate
set has $r(N-1)\geq N-1$ elements, so the size requirement above holds for all
sufficiently large $n$, independently of $r$.

Fix any $s$-element set $T\subseteq\Omega_1\cup\cdots\cup\Omega_r$,
where $s\leq n/2$, and put $T_i=T\cap\Omega_i$ and $s_i=|T_i|$, so that
$\sum_{i=1}^r s_i=s$. Within copy $i$, the set $T_i$ has at least
$(1+s_i)^{n/2}$ distinct images under $G$, with an empty piece
contributing $1$. These images can be chosen independently in the
different copies, and distinct combinations give distinct images of
$T$, because the copies are disjoint. Therefore
\[
 |\operatorname{Orb}_H(T)|
 \geq\prod_{i=1}^r(1+s_i)^{n/2}
 \geq\left(1+\sum_{i=1}^r s_i\right)^{n/2}
 =(1+s)^{n/2}.
\]
Thus the same orbit bound holds for $H$, independently of $r$, and
the same choice $s=\lfloor\sqrt n\rfloor$ gives
$a_1(H)\geq(n/4)\ln n$.

Applying the Bourgain--Kalai estimate to $\mathcal E_r$, and using
$\beta(t_0)(1-\beta(t_0))\geq3/16$, we obtain
\[
 r\,h'(t_0)\geq\beta'(t_0)
 \geq c_{\mathrm{BK}}a_1(H)\beta(t_0)(1-\beta(t_0))
 \geq c_1\,n\ln n
\]
where $c_1=3c_{\mathrm{BK}}/64$. Since $1/r\geq\alpha$, this implies
\[
 h'(t_0)\geq c_1\,n\ln n\,\alpha
 \geq c_1\,n\ln n\,\alpha(1-\alpha).
\]

If $1/2<\alpha<1$, instead take $r=\lfloor1/(1-\alpha)\rfloor$ and
use the event that recovery fails in every repetition. Its probability
is $\gamma(u)=h(u)^r$. The same probability bounds, with $1-\alpha$ in
place of $\alpha$, give $\gamma(t_0)\in[1/4,3/4]$. This event is also
monotone and invariant under $H$, while
\[
 \gamma'(t_0)=r\alpha^{r-1}h'(t_0)\leq r\,h'(t_0).
\]
The same application of Bourgain--Kalai now yields
\[
 h'(t_0)\geq\frac{c_1}{r}\,n\ln n
 \geq c_1\,n\ln n\,(1-\alpha)
 \geq c_1\,n\ln n\,\alpha(1-\alpha).
\]
Since $t_0$ was arbitrary, this proves~\eqref{eq:enhanced} throughout
$[1/4,3/4]$. The constant and the required lower bound
on $n$ are independent of $k$.
\end{proof}

\section{A comparison for whole-word recovery}\label{sec:whole-recovery}

To pass from recovery of a single coordinate to recovery of the whole
codeword, we use bounds on the supports of subspaces of the code.
For a binary linear code $C$ and $1\leq j\leq\dim C$, write
\[
 d_j(C)=\min\{|\supp E|:E\subseteq C,\ \dim E=j\}
\]
for the $j$th generalized Hamming weight, with the minimum taken over
linear subspaces $E$ of $C$.
Thus $d_j(C)$ is the smallest number of positions that can support $j$
independent codewords. For an erased set $U$, the dimension of $C_U$
counts the number of independent binary choices still left unresolved
by the visible entries; there are $2^{\dim C_U}$ compatible codewords.
The next lemma controls the probability that any ambiguity remains,
using the probability of a large unresolved subspace at a higher
erasure probability.

\begin{lemma}\label{lem:comparison}
For a binary linear code $C$, if $0<p<q<1$ and $K$ is an integer with
$2\leq K\leq\dim C$, then
\[
 B_C(p)\leq\Pr(\dim C_{U_q}\geq K)
             +\frac1{q-p}\sum_{j=1}^{K-1}\frac1{d_j(C)}.
\]
\end{lemma}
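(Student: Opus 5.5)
The plan is to apply Russo's formula (the Margulis--Russo lemma) to the monotone events $A_j=\{\dim C_{U_t}\geq j\}$, bound each derivative from below by the generalized Hamming weight $d_j(C)$, and then integrate over $t\in[p,q]$. Write $D(U)=\dim C_U$. Since $U\subseteq U'$ implies $C_U\subseteq C_{U'}$, $D$ is non-decreasing in $U$, and each $A_j$ is a monotone increasing event in the erased set.

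\textbf{Step 1: pivotality.} For a set $U$ and a coordinate $x$, the space $C_{U\setminus\{x\}}$ is the subspace of $C_{U\cup\{x\}}$ cut out by the single linear condition $w_x=0$. So $D(U\cup\{x\})-D(U\setminus\{x\})$ is $1$ if $x\in\supp C_{U\cup\{x\}}$, and $0$ otherwise. Hence $x$ is pivotal for $A_j$ exactly when $x\in\supp C_{U\cup\{x\}}$ and $D(U\cup\{x\})=j$. Russo's formula then gives
\[
 \frac{d}{dt}\Pr(A_j)=\sum_x\Pr\bigl(x\in\supp C_{U_t\cup\{x\}},\ D(U_t\cup\{x\})=j\bigr).
\]

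\textbf{Step 2: lower bound by $d_j(C)$.} On the event that $x\in U_t$, $x\in\supp C_{U_t}$ and $D(U_t)=j$, we have $U_t\cup\{x\}=U_t$, so each summand is at least $\Pr(x\in\supp C_{U_t},\ D(U_t)=j)$. Summing over $x$ gives
\[
 \frac{d}{dt}\Pr(A_j)\geq\E\bigl[|\supp C_{U_t}|\,\mathbf 1\{D(U_t)=j\}\bigr]\geq d_j(C)\Pr\bigl(D(U_t)=j\bigr).
\]
The last inequality uses the definition of $d_j(C)$: when $D(U_t)=j$, the subspace $C_{U_t}$ has dimension $j$, so its support has size at least $d_j(C)$. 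Integrating from $p$ to $q$ and using $0\leq\Pr(A_j)\leq1$, we get
\[
 \int_p^q\Pr\bigl(D(U_t)=j\bigr)\,dt\leq\frac1{d_j(C)}\qquad(1\leq j\leq K-1).
\]

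\textbf{Step 3: averaging.} For any $t\in[p,q]$, monotonicity (via the standard coupling $U_p\subseteq U_t\subseteq U_q$) gives
\[
 B_C(p)=\Pr\bigl(D(U_p)\geq1\bigr)\leq\Pr\bigl(D(U_t)\geq1\bigr)\leq\Pr\bigl(D(U_q)\geq K\bigr)+\sum_{j=1}^{K-1}\Pr\bigl(D(U_t)=j\bigr).
\]
Averaging this inequality over $t\in[p,q]$ and inserting the integral bounds from Step 2 yields the lemma.

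The main obstacle is Step 1: identifying pivotality for $A_j$ correctly. One must check that adding a coordinate raises $\dim C_U$ by at most one, and does so exactly when that coordinate is unrecoverable. One must also check that the pivotal configuration dominates the event $\{x\in\supp C_{U_t},\ D(U_t)=j\}$, which is the event that converts the derivative into an expected support size. Once that is in place, the remaining steps are routine.
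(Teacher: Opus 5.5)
Your proof is correct and follows the paper's argument. Your Step~3 is exactly the paper's telescoping-and-monotonicity step on $[p,q]$, using $\Pr(\dim C_{U_t}=j)=f_j(t)-f_{j+1}(t)$ with $f_j(t)=\Pr(\dim C_{U_t}\geq j)$. The only difference is that the paper cites the Pfister--Sprumont--Z\'emor inequality $\int_0^1(f_j-f_{j+1})\,dt\leq1/d_j(C)$. Your Steps~1--2 instead rederive it via Russo's formula, on $[p,q]$ and losing only a harmless factor $t\leq1$, which makes the lemma self-contained.
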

\begin{proof}
Put $f_j(t)=\Pr(\dim C_{U_t}\geq j)$, so that $f_1(t)=B_C(t)$.
Pfister, Sprumont and Z\'emor establish the following inequality
in the proof of their Theorem~6~\cite[equation~(10)]{PSZ}:
\[
 \int_0^1(f_j(t)-f_{j+1}(t))\,dt\leq\frac1{d_j(C)}.
\]
Summing over $j=1,\ldots,K-1$, the intermediate terms cancel, giving
\[
 \int_0^1(f_1(t)-f_K(t))\,dt
 \leq\sum_{j=1}^{K-1}\frac1{d_j(C)}.
\]
The integrand is non-negative, so this also bounds its integral over
$[p,q]$. Moreover, the functions $f_j$ are increasing, and hence
throughout this interval
\[
 f_1(t)-f_K(t)\geq f_1(p)-f_K(q).
\]
It follows that
\[
 (q-p)(f_1(p)-f_K(q))\leq\sum_{j=1}^{K-1}\frac1{d_j(C)}.
\]
Rearranging and substituting $f_1(p)=B_C(p)$ and
$f_K(q)=\Pr(\dim C_{U_q}\geq K)$ proves the result.
\end{proof}

\section{Proof of the main theorem}\label{sec:main-proof}

\noindent\textbf{The counting lower bound.}
Recall that $N=2^n$ and $m_d=\sum_{j=0}^d\binom nj$ for
$0\leq d\leq n$. By the central arrangement bound~\cite{Anthony},
the number of functions of threshold degree at most $d$ is at most
\[
 2\sum_{i=0}^{m_d-1}\binom{N-1}{i}.
\]
Dividing by $2^N$, we obtain
\begin{equation}\label{eq:central-count}
 \Pr(\td(f)\leq d)
 \leq\Pr\bigl(\operatorname{Bin}(N-1,1/2)\leq m_d-1\bigr).
\end{equation}
Now put $d_0=\lfloor n/2\rfloor-1$ and $m=m_{d_0}$.
The central binomial estimates give
$m/N\leq1/2-c_2/\sqrt n$ for a fixed positive absolute constant $c_2$
and all sufficiently large $n$. Hoeffding's inequality therefore gives
\[
 \Pr(\td(f)\leq d_0)
 \leq\Pr\bigl(\operatorname{Bin}(N-1,1/2)\leq m-1\bigr)
 \leq\exp(-2c_2^2N/n).
\]

\medskip
\noindent\textbf{Single-coordinate recovery.}
For the upper bound, fix $d=\lceil n/2\rceil$ and
$C=\RM(n,n-d-1)$. Since $n-d-1=d_0$, this code has dimension $m$.
By Lemma~\ref{lem:regions}, it suffices to bound $B_C(1/2)$.
Define $g$ by
\[
 \frac{\dim C}N=\frac mN=\frac12-g.
\]
Therefore
\[
 g=\begin{cases}
  2^{-n-1}\binom n{n/2},&n\text{ even},\\[3pt]
  2^{-n}\binom n{(n-1)/2},&n\text{ odd}.
 \end{cases}
\]
The earlier bound on $m/N$ gives $g\geq c_2/\sqrt n$ for large $n$.
The central binomial estimates also give $g=O(n^{-1/2})$.
The code $C$ has order $k=d_0=\lfloor n/2\rfloor-1$.
As in Section~\ref{sec:erasure-recovery}, each codeword is specified by
$m=N(1/2-g)$ binary coefficients, and the visible values give a linear
system in these coefficients.
At erasure probability $1/2$, the expected number of equations is
$N/2$, exceeding the number of unknowns by $Ng$. This surplus suggests
that recovery may remain possible at erasure probabilities slightly
above $1/2$, and motivates our choices of $p$ and $q$ below. The
equation count alone does not establish recovery. As the example in
Section~\ref{sec:erasure-recovery} shows, we need the visible equations
to have rank $m$.

At erasure probability $1/2+g$, the expected number of visible values
equals $m$. This is also the ideal transition location suggested by
the area identity. We will work below this value, leaving a gap of $g/2$
for the single-coordinate estimate at $q$ and a further gap of $g/4$
to pass from recovery of one coordinate to recovery of the whole word at $p$.
We will first show that recovery fails with probability tending to
zero even when the erasure probability is slightly greater than $1/2$.
A minimum-distance estimate will then give the required exponential
bound at $1/2$.

We choose
\[
 p=\frac12+\frac g4,\qquad q=\frac12+\frac g2.
\]
We will bound $B_C(p)$ by applying Lemma~\ref{lem:comparison}
with these values of $p$ and $q$.
Let $h$ be as in Lemma~\ref{lem:coordinate}.
Since $g=O(n^{-1/2})$, we have $g\leq1/6$ for large $n$, so the lemma
applies. Using $g\geq c_2/\sqrt n$ in its bound gives, with
$c_3=c_1c_2/5$,
\begin{equation}\label{eq:bit}
 h(q)\leq\exp(-c_3\sqrt n\ln n).
\end{equation}

\medskip
\noindent\textbf{Whole-word recovery.}
We have bounded the probability of failing to recover a specified
coordinate. We now need to bound $B_C(p)$, the probability of failing
to recover the whole codeword. A union bound at $q$ would give
$B_C(q)\leq Nqh(q)$, but substituting~\eqref{eq:bit} into this
bound does not give a quantity tending to zero: $N=2^n$ grows too
quickly. We instead control the probability that a large subspace
of possibilities remains, and use generalized weights to handle
the smaller dimensions after decreasing the erasure probability.
We choose $K$ so that both terms on the right-hand side of
Lemma~\ref{lem:comparison} tend to zero.
Set
\[
 K=\left\lfloor2^{\,n-6\sqrt{n\log_2 n}}\right\rfloor.
\]
For all sufficiently large $n$, $2\leq K\leq\dim C$.

We first bound $\Pr(\dim C_{U_q}\geq K)$ using our estimate for
$h(q)$. Summing~\eqref{eq:coordinate-support} over the $N$ coordinates gives
\[
 \E|\supp C_{U_q}|=Nqh(q).
\]
A subspace supported on $s$ coordinates has dimension at most $s$.
Thus, whenever $\dim C_{U_q}\geq K$, the support contains at least
$K$ coordinates. Its expected size is therefore at least $K$ times
the probability of this event, giving
\[
 \Pr(\dim C_{U_q}\geq K)
 \leq\frac{\E|\supp C_{U_q}|}{K}
 =\frac{Nqh(q)}K
 \leq\frac{Nh(q)}K.
\]
Using~\eqref{eq:bit} and the definition of $K$, we obtain
\[
 \frac{Nh(q)}K
 \leq2\exp\left(-c_3\sqrt n\ln n
                  +6\sqrt{n\ln n\ln2}\right)=o(1),
\]
since the negative term in the exponent grows faster in magnitude
than the positive term. The factor $\ln n$ from Bourgain and Kalai
is used here: without it, the negative term would have order $\sqrt n$
and could not dominate the positive term of order $\sqrt{n\ln n}$.
The range of the generalized-weight estimate used below prevents us
from taking a substantially larger $K$.

We now bound the second term in Lemma~\ref{lem:comparison}.
Taking $u=6\sqrt{\log_2 n/n}$ in Corollary~10 of Pfister, Sprumont
and Z\'emor~\cite{PSZ} gives
\[
 d_j(C)\geq j2^{u^2n/10}=j n^{18/5}\qquad(1\leq j<K).
\]
The conditions of their bound hold: the order $n-d-1$ is at most
$n/2$, $6\sqrt{\log_2 n/n}\leq u\leq1/2$ for large $n$, and
$j<K\leq2^{n-un}$.
Since $\sum_{j<K}1/j\leq1+\ln K\leq1+n\ln2$, while
$q-p=g/4$ and $g\geq c_2/\sqrt n$, it follows that
\[
 \frac1{q-p}\sum_{j=1}^{K-1}\frac1{d_j(C)}
 \leq\frac{4(1+n\ln2)}{g n^{18/5}}
 =O(n^{3/2-18/5})=O(n^{-21/10})=o(1).
\]
Both terms in Lemma~\ref{lem:comparison} tend to zero, so
$B_C(p)=o(1)$.

\medskip
\noindent\textbf{The exponential bound at $1/2$.}
Since $B_C$ is continuous, with $B_C(0)=0$ and $B_C(1)=1$, choose
$b\in(0,1)$ such that $B_C(b)=1/2$. For large $n$, $B_C(p)<1/2$,
so monotonicity gives $b>p$, and hence $b-1/2\geq g/4$.
By the minimum-distance formula recalled in
Section~\ref{sec:erasure-recovery}, the minimum distance of $C$ is
\[
 D=2^{d+1}=2^{\lceil n/2\rceil+1}.
\]
The estimate of Tillich and Z\'emor controls the decay of $B_C(t)$ for
$t<b$, where $B_C(b)=1/2$. Since $b-1/2\geq g/4$, it gives an exponent
of order $Dg^2$, explaining the scale $2^{n/2}/n$ in the theorem.
Write $\Phi$ for the standard normal distribution function.
Since $b>1/2$, the estimate of Tillich and Z\'emor~\cite[Theorem~5.2,
equation~(5.1)]{TZ} gives
\[
 B_C(1/2)\leq
 \Phi\!\left(\sqrt{2D}
       \left(\sqrt{-\ln b}-\sqrt{\ln2}\right)\right).
\]
Since $x^2\ln(1/x)\leq1/(2e)$ for $0<x<1$, we have
\[
 \sqrt{\ln2}-\sqrt{-\ln b}
 =\int_{1/2}^{b}\frac{dx}{2x\sqrt{-\ln x}}
 \geq\sqrt{\frac e2}\,(b-1/2).
\]
The standard Gaussian tail bound $\Phi(-z)\leq\tfrac12e^{-z^2/2}$
for $z\geq0$ therefore gives
\[
 B_C(1/2)\leq\tfrac12\exp\left(-\frac{eD}{2}(b-1/2)^2\right).
\]
Since $b-1/2\geq g/4$, Lemma~\ref{lem:regions} and the lower bound
on $g$ therefore give
\[
 \Pr\bigl(\td(f)>\lceil n/2\rceil\bigr)
 \leq\exp(-eDg^2/32)
 \leq\exp\left(-\frac{e c_2^2}{16}\frac{2^{n/2}}n\right).
\]
Choose the constant in the theorem as
$c=e c_2^2/32$. For sufficiently large $n$, the sum of
this upper-tail bound and the lower-tail bound $\exp(-2c_2^2N/n)$ is
at most $\exp(-c2^{n/2}/n)$, proving the theorem.\nobreak\hfill\qedsymbol

\section{Consequences and the exceptional probability}\label{sec:consequences}

\noindent\textbf{The Wang--Williams conjecture.}
Theorem~\ref{thm:main} proves the conjecture of Wang and
Williams~\cite{WW} for even $n$ and its concentration assertion for odd
$n$: almost every Boolean function has threshold degree among the
central one or two integers. Their additional assertion for
odd $n$, that the proportions at the two central degrees both tend to
$1/2$, remains open. The counting argument already supplies one side
of the bound needed for this balance.

For odd $n=2\nu+1$, the space of polynomials of degree at most $\nu$ on
the cube has dimension $\sum_{j=0}^{\nu}\binom{n}{j}=N/2$.
By symmetry of $\operatorname{Bin}(N-1,1/2)$, the counting
bound~\eqref{eq:central-count} therefore gives
\[
 \Pr\bigl(\td(f)\leq \nu\bigr)\leq\frac12.
\]
Combining this with Theorem~\ref{thm:main}, we obtain
\[
 \Pr\bigl(\td(f)=\nu\bigr)\leq\frac12,
 \qquad
 \Pr\bigl(\td(f)=\nu+1\bigr)\geq\frac12-o(1),
\]
as $n$ tends to infinity through odd values. To complete the
Wang--Williams conjecture, it remains to prove the reverse asymptotic
bound
\[
 \Pr\bigl(\td(f)=\nu\bigr)\geq\frac12-o(1).
\]
This would force both central proportions to tend to $1/2$.

\medskip
\Needspace{7\baselineskip}
\noindent\textbf{The Aspnes--Beigel--Furst--Rudich conjecture.}
Theorem~\ref{thm:main} also gives the same concentration bound for weak
threshold degree. Write $\td_{\mathrm w}(f)$ for the weak threshold
degree defined in Section~\ref{sec:introduction}. Put
$\chi(x)=(-1)^{x_1+\cdots+x_n}$ for full parity. The change of variables
$y_i=1-2x_i$ between $\{0,1\}^n$ and $\{-1,1\}^n$ preserves degree
bounds in both directions and identifies $\chi(x)$ with $\prod_{i=1}^n y_i$.
Thus the duality of Aspnes, Beigel, Furst and
Rudich~\cite[Lemma~2.5]{ABFR} applies to our convention and gives
\[
 \td_{\mathrm w}(f)+\td(f\chi)=n.
\]
Since $f\chi$ is uniformly distributed whenever $f$ is, and reflection
about $n/2$ preserves the central interval, the claimed concentration
follows from Theorem~\ref{thm:main}. For even $n$, applying the
concentration bounds to both degrees shows that almost every Boolean
function satisfies
\[
 \td(f)=\td_{\mathrm w}(f)=n/2.
\]
This proves the conjecture of Aspnes, Beigel, Furst and
Rudich~\cite{ABFR} stated in Section~\ref{sec:introduction}.

\medskip
\Needspace{7\baselineskip}
\noindent\textbf{A parity lower bound.}
For $n\geq2$, fix a coordinate subcube of dimension
$\ell=\lceil n/2\rceil+1$, and put $D=2^\ell$ as in
Section~\ref{sec:main-proof}.
Parity on this subcube has threshold degree
$\ell$: it is orthogonal to every polynomial of smaller degree, so no
such polynomial can agree with it in sign. Restriction to a coordinate
subcube preserves the degree bound. Thus, if the restriction of $f$
to this subcube is parity or its negative, then $\td(f)\geq\ell$.
The subcube has $D=2^\ell$ points, and the restriction of a uniformly
random $f$ is uniform among the $2^D$ Boolean functions on it.
Exactly two of these functions are parity and its negative, so this
event has probability $2/2^D=2^{1-D}=2^{1-2^\ell}$. Hence
\[
 \Pr\bigl(\td(f)>\lceil n/2\rceil\bigr)\geq2^{1-D}.
\]
The exponent in this lower bound is of order $D$, whereas that in our
upper bound is of order $D/n$. The two exponents therefore differ by
a factor of order $n$.

\section*{Funding statement}
This work received no specific grant from any funding agency,
commercial or not-for-profit sectors.

\section*{Competing interests}
The author declares none.

\section*{Data availability}
No data were generated or analysed in this work.

\section*{Declaration of generative AI use}
Anthropic's Claude and OpenAI's ChatGPT were used to assist with literature
research, improvements to the clarity of exposition, and the development
and assessment of some ideas and arguments. In particular, they helped
resolve a concern about applying the Bourgain--Kalai influence inequality
at small variance by developing the independent-copy amplification
argument given in the proof of Lemma~\ref{lem:derivative}. The author checked
the arguments and takes responsibility for the content of the paper.

\end{document}